\newcommand{\norm}[1]{\left\lVert#1\right\rVert}
\def\R{\mathbb{R}}
\newcounter{subeqn} %
\newtheorem{theorem}{Theorem}
\newtheorem{assumption}{Assumption}
\def\1{\mathbf{1}}
\def\BibTeX{{\rm B\kern-.05em{\sc i\kern-.025em b}\kern-.08em T\kern-.1667em\lower.7ex\hbox{E}\kern-.125emX}}
\title{Improving IoT Analytics through Selective Edge Execution}
\author{\IEEEauthorblockN{Apostolos Galanopoulos\IEEEauthorrefmark{1}, Argyrios G. Tasiopoulos\IEEEauthorrefmark{2}, George Iosifidis\IEEEauthorrefmark{1}, \\ Theodoros Salonidis\IEEEauthorrefmark{3}, Douglas J. Leith\IEEEauthorrefmark{1}}\\
	\IEEEauthorblockA{
		\IEEEauthorrefmark{1}School of Computer Science and Statistics, Trinity College Dublin\\
		\IEEEauthorrefmark{2}Department of Electronic and Electrical Engineering, University College London\\
		\IEEEauthorrefmark{3}IBM T. J. Watson Research Center, New York}
}
\begin{document}
\maketitle
\thispagestyle{plain}
\pagestyle{plain}

\begin{abstract}
A large number of emerging IoT applications rely on machine learning routines for analyzing data. Executing such tasks at the user devices improves response time and economizes network resources. However, due to power and computing limitations, the devices often cannot support such resource-intensive routines and fail to accurately execute the analytics. In this work, we propose to improve the performance of analytics by leveraging edge infrastructure. We devise an algorithm that enables the IoT devices to execute their routines locally; and then outsource them to cloudlet servers, only if they predict they will gain a significant performance improvement. It uses an approximate dual subgradient method, making minimal assumptions about the statistical properties of the system's parameters. Our analysis demonstrates that our proposed algorithm can intelligently leverage the cloudlet, adapting to the service requirements. 

\end{abstract}

\vspace{1mm}
\begin{IEEEkeywords}
Edge Computing, Network Optimization, Resource Allocation, Data Analytics
\end{IEEEkeywords}

\section{Introduction}

The recent demand for machine learning (ML) applications, such as image recognition, natural language translation, and health monitoring, has been unprecedented \cite{tiropanis-survey}. These services collect data streams generated by small devices, and analyze them locally or at distant cloud servers. There is growing consensus that such applications will be ubiquitous in Internet of Things (IoT) systems \cite{Jiang-17}. The challenge, however, with such services is that they are often resource intensive. On the one hand, the cloud offers powerful ML models and abundant compute resources but requires data transfers which consume network bandwidth and might induce significant delays \cite{cisco-cloud}. On the other hand, executing these services at the devices economizes bandwidth but degrades their performance due to the devices' limited resources, e.g. memory or energy.

A promising approach to tackle this problem is to allow the devices to outsource individual ML tasks to edge infrastructure such as cloudlets \cite{Cloudlets}. This can increase their execution accuracy since the cloudlet's ML components are typically more complex, and hence offer improved results. Nevertheless, the success of such solutions presumes intelligent outsourcing algorithms. The cloudlets, unlike the cloud, have limited computing capacity and cannot support all requests. At the same time, task execution requires the transfer of large data volumes (e.g., video streams). This calls for prudent transmission decisions in order to avoid wasting device energy and bandwidth. Furthermore, unlike prior computation offloading solutions \cite{letaief-edge-tutorial}, it is crucial to only outsource the tasks that can significantly benefit from cloudlet execution. 

\emph{Our goal is to design an online framework that addresses the above issues and makes intelligent outsourcing decisions}. We consider a system where a cloudlet improves the execution of image classification tasks running on devices such as wireless IoT cameras. 
We assume that each device has a "low-precision" classifier while the cloudlet can execute the task with higher precision. The devices classify the received objects upon arrival, and decide whether to transmit them to the cloudlet or not, to get a better classification result. Making this decision requires an assessment of the potential performance gains, which are measured in terms of accuracy improvements. To this end, we propose the usage of a \emph{predictor} at each device that leverages the local classification results.

We consider the practical case where the resources' availability is unknown and time-varying, but their instantaneous values are observable. We design a distributed adaptive algorithm that decides the task outsourcing policy towards maximizing the long-term performance of analytics. To achieve this, we formulate the system's operation as an optimization problem, which is decomposed via Lagrange relaxation to a set of device-specific problems. This enables its distributed solution through an \emph{approximate} -- due to the unknown parameters -- dual ascent method, that can be applied in real time. The method is inspired by primal averaging schemes for \emph{static} problems, e.g., see \cite{nedic-subgrad-siam}, and achieves a \emph{bounded and tunable} optimality gap using a novel approximate iteration technique. Our contributions can be summarized as follows:
\begin{itemize}[leftmargin=3mm]
\item \underline{\emph{Edge Analytics}}. We study the novel problem of intelligently improving data analytics tasks using edge infrastructure, which is increasingly important for the IoT. 

\item \underline{\emph{Decision Framework}}. We propose an \emph{online} task outsourcing algorithm that achieves near-optimal performance under very general conditions (unknown, non i.i.d. statistics). This is a novel analytical result of independent value.

\item \underline{\emph{Implementation \& Evaluation}}. The solution is evaluated in a wireless testbed using a ML application, several classifiers and datasets. We find that our algorithm increases the accuracy (up to $32\%$) and reduces the energy (down to $60\%$) compared to carefully selected benchmark policies.
\end{itemize}

\textbf{Organization}. Sec. \ref{sec:model} introduces the model and the problem. Sec. \ref{sec:algorithms} presents the algorithm and Sec. \ref{sec:evaluation} the system implementation, experiments and trace-driven simulations. We discuss related work in Sec. \ref{sec:related} and conclude in Sec. \ref{sec:conclusions}. 
Although the paper is completely self-sufficient, the interested reader will find more results from the implementation of our system, as well as a more detailed version of the proof of our main analytical contribution in \cite{appendix}.
\section{Model and Problem Formulation} \label{sec:model}

\textbf{Classifiers}. There is a set $\mathcal{C}$ of $C$ disjoint object classes and a set $\mathcal{N}$ of $N$ edge devices. We assume a time-slotted operation where each device $n$ receives at slot $t$ a group of objects (or tasks) $\mathcal{S}_{nt}$ to be classified, e.g., frames captured by its camera. We define $\mathcal{S}_n\!\supseteq\! \mathcal{S}_{nt}, \forall t$ as the set of objects that can arrive at $n$, and $\mathcal{S}\!=\!\cup\{\mathcal S_{n}\}_{n}$. Each device $n$ is equipped with a \emph{local} classifier $J_n\!:\!\mathcal{S}_{n}\!\rightarrow\! \big(\mathcal{C}_n, d_{n}(s_{nt}) \big)$, which outputs the inferred class of an object $s_{nt}$ and a normalized confidence value $d_n(s_{nt})\in[0,1]$ for that inference\footnote{The classifier might output only the class with the highest confidence, or a vector with the confidence for each class; our analysis holds for both cases.}. The cloudlet has a classifier $J_0\!:\!\mathcal{S}\!\rightarrow\!\big(\mathcal{C}_0, d_{0}(s_{nt}) \big)$ that can classify any object, and offers higher accuracy from all devices, i.e., $d_0(s_{nt})\!\geq\! d_n(s_{nt}), \forall n\in\mathcal{N}$.

Let $\phi_{nt}\in[0,1]$ denote the accuracy improvement when the cloudlet classifier is used:
\begin{equation}
	\phi_{nt}(s_{nt})=d_{0}(s_{nt}) - d_n(s_{nt}),\,\,\,\,\, \forall n\in\mathcal{N},\, s_{nt}\in\mathcal{S}_{nt}. \label{eq:predictor-gain}
\end{equation}
Every device is also equipped with a predictor\footnote{This can be a model-based or model-free solution, e.g., a regressor or a neural-network; our analysis and framework work for any of these solutions. In the implementation we used a mixed-effects regressor, see \cite{regression-multilevel}. } $Q_n$ that is trained with the outcomes of the local and cloudlet classifiers. This predictor can estimate the accuracy improvement offered by the cloudlet for each object $s_{nt} \in \mathcal{S}_{nt}$:
\begin{equation}
 Q_n: \big(J_n(s_{nt})\big) \rightarrow \big(\hat{\phi}_{nt},\, \sigma_{nt}\,\big),
\end{equation} 
and, in general, this assessment might be inexact, $\hat{\phi}_{nt}(s_{nt})\!\neq\! \phi_{nt}(s_{nt})$, and $\sigma_{nt}\!\in\![0,1]$ is the respective confidence value.

\textbf{Wireless System}. The devices access the cloudlet through high capacity cellular or Wi-Fi links. Each device $n$ has an \emph{average power budget} of $B_n$ Watts. Power is a key limitation here because the devices might have a small energy budget due to protocol-induced transmission constraints, or due to user aversion for energy spending. The cloudlet has an \emph{average processing capacity} of $H$ cycles/sec which is shared by the devices, and when the total load exceeds $H$, the task delay increases and eventually renders the system non-responsive.

\emph{We consider the realistic scenario where the parameters of devices and the cloudlet change over time in an unknown fashion}. Namely, they are created by random processes $\{B_{nt}\}_{t=1}^{\infty}$ and $\{H_t\}_{t=1}^{\infty}$, and our decision framework has access only to their instantaneous values in each slot. Unlike previous optimization frameworks \cite{tassiulas-book} that assume i.i.d., or Markov modulated processes; here we only ask that these perturbations are bounded in each slot, i.e. $H_{t}\leq H_{max},\,B_{nt}\leq B_{max}, \forall t$ and their averages converge to some finite values which we do need to know, i.e., $\lim_{t\rightarrow \infty}\sum_{\tau=1}^tB_{nt}/t=B_n,\forall n$, and similarly for $\{H_t\}_{t=1}^\infty$. We also define $\bm{B}_t=(B_{nt}, n\in\mathcal{N})$.

When an object (say, image) is transmitted in slot $t$ from device $n$ to the cloudlet, it consumes\footnote{Power budgets are also affected by the local classifier computations which are made for every object and thus do not affect the offloading decisions.} part of the device's power budget $B_n$. We assume that this cost, denoted $o_{nt}$, follows a random process $\{o_{nt}\}_{t=1}^{\infty}$ that is uniformly upper-bounded and has well-defined mean values.\footnote{This cost can reflect, e.g., the impact of time-varying channel conditions.} Also, each transmitted object requires a number of processing cycles in the cloudlet which might also vary with time, e.g., due to the different type of the objects, and we assume it follows the random process $\{h_{nt}\}_{t=1}^{\infty}$, with $\lim_{t\rightarrow \infty}\sum_{\tau=1}^t h_{nt}/t=h_n$. We define $\bm{o}_t\!=\!(o_{nt}\!\leq\! o_{max}, n\in\mathcal{N})$, and $\bm{h}_t\!=\!(h_{nt}\!\leq\! h_{max}, n\in\mathcal{N})$. Our model is very general as the \emph{(i)} requests, \emph{(ii)} power and computing cost per request, and \emph{(iii)} resource availability, can be arbitrarily time-varying, and with unknown statistics.
 
\begin{figure}[t]
	\centering
	\includegraphics[scale=0.37]{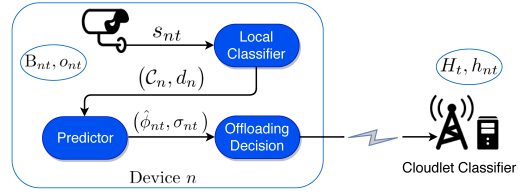}
	\caption{Schematic of the basic notation and procedure followed by the system's devices.}
	\label{fig:system-model}
	\vspace{-2mm}
\end{figure}

\textbf{Problem Formulation}. The IoT devices wish to involve the cloudlet only when they confidently expect high classification precision gains. Otherwise, they will consume the cloudlet's capacity and their own power without significant performance benefits. Therefore, we make the outsourcing decision for each object $s_{nt}$ based on the \emph{weighted improvement gain}:
\begin{equation} \label{eq:accuracy}
w_{nt}(s_{nt})=\hat{\phi}_{nt}-\rho_{n}\sigma_{nt},\,\forall\,n,t\,,
\end{equation}
where $\rho_{n}\!\geq\!0$ is a \emph{risk aversion} parameter set by the system designer or each user. For example, assuming normal distribution for $\phi_{nt}$, we could set $\rho_{n}\!=\!1$ and use a threshold rule of $1$ standard deviation. We use hereafter these modified parameters $w_{nt}, \forall n$, and partition the interval of their values $[-w_{0},w_{0}]$ ($w_0$ being the maximum) into subintervals $\mathcal{I}^j, j=1,\ldots,M$ such that $\cup_{j=1}^{M}\mathcal{I}_j=[-w_{0},w_{0}], \forall i\!\neq\! j$; with $w_n^j$ being the center point of $\mathcal{I}^j$. This quantization facilitates the implementation of our algorithm in a real system, and is without loss of generality since we can use very short intervals. Finally, let $\lambda_{nt}^j$ denote the number of objects with expected gain $w_n^j$ that device $n$ has created in slot $t$. These \emph{arrivals} are generated by an unknown process $\{\lambda_{nt}^j \}_{t=1}^{\infty}$, with $\lim_{T\rightarrow \infty}1/T\sum_{t=1}^{T}\lambda_{nt}^j=\lambda_n^j, \forall n,j$. 


\emph{Our aim is to maximize the aggregate long-term analytics performance gains, for all objects and IoT devices.} This can be formulated as a mathematical program. We define variables $y_{n}^j\in[0,1], \forall n,j$ which indicate the long term ratio of objects with expected gain of $w_n^j$ that are sent to the cloudlet (with $y_n^j\!=\!1$, when all objects of $n$ in $\mathcal I^j$ are sent), and formulate the convex problem:
\vspace{-1mm}
\begin{subequations}
\begin{align}
\mathbb P_1:&\qquad\underset{y_{n}^j\in [0,1]}{\text{maximize}} \sum_{j=1}^{M}\sum_{n=1}^N w_n^j\lambda_{n}^jy_{n}^j\triangleq f(\bm{y})\\
s.t.&\qquad \sum_{j=1}^M y_{n}^j\lambda_{n}^jo_{n} \le B_n,\,n\in\mathcal{N}, \label{eq:P1-device-constraint}\\
& \qquad \sum_{j=1}^M\sum_{n=1}^N y_{n}^j\lambda_{n}^jh_{n}\le H, \label{eq:P1-cloudlet-constraint}
\end{align}
\end{subequations}
where $\bm{y}\!=\!( y_{n}^j\!:\! \forall n, j )$. Eq. \eqref{eq:P1-device-constraint} constraints the average power budget of each device and \eqref{eq:P1-cloudlet-constraint} bounds the cloudlet utilization. Clearly, based on the specifics of each system we can add more constraints, e.g., for the average wireless link capacity in case bandwidth is also a bottleneck resource. Such extensions are straightforward as they do not change the properties of the problem, nor affect our analysis below. 

The solution of $\mathbb P_1$ is a policy $\bm y^*$ that maximizes the aggregate (hence also average) analytics performance in the system. Such policies can be randomized, with $y_{n}^{j*}$ denoting the probability of sending each object of $n$ in interval $\mathcal I^j$ to the cloudlet (at each slot). However, in reality, the system parameters not only change with time, but are generated by processes that might not be i.i.d. and have unknown statistics (mean values, etc.). This means that in practice we cannot find $\bm y^*$. In the next section we present an online policy that is oblivious to the statistics of $\{\bm{\lambda}_t \}, \{\bm{o}_t\}, \{\bm{h}_t\}, \{H_t\}, \{\bm{B}_t\}$ but achieves indeed the same performance with $\bm y^*$.

%
%
%
%
%
%
%
%
%
\section{Online Offloading Algorithm} \label{sec:algorithms}

Our solution approach is simple and, we believe, elegant. We replace the unknown parameters $H$, $\lambda_n^j$, $B_n$, $o_n$ and $h_n$, $\forall n,j$ in $\mathbb P_1$ with their running averages (which we calculate as the system operates), solve the modified problem with gradient ascent in the dual space, and perform primal averaging. This gives us an online policy that applies in real time the solution $\bm y_t, \forall t,$ while using only information made available by slot $t$. 

\vspace{-1mm}
\subsection{Problem Decomposition \& Algorithm Design}
Let us first define the running-average function:
\begin{align*}
	\bar f_t(\bm y)&\triangleq\sum_{j=1}^{M}\sum_{n=1}^N w_n^jy_{n}^j\bar \lambda_{nt}^j\\ \notag
	&=\sum_{j=1}^{M}\sum_{n=1}^N w_n^j\lambda_{n}^jy_{n}^j+\sum_{j=1}^{M}\sum_{n=1}^N w_n^jy_{n}^j(\lambda_{n}^j-\bar \lambda_{nt}^j) \\ \notag
	&=f(\bm y) + \bm y^\top\epsilon_t, \nonumber
\end{align*}
where $\bar \lambda_{nt}^j\!=\!\sum_{\tau=1}^t\lambda_{n\tau}^j/t$ is the running average of $\lambda_{n}^j$, and $\epsilon_t\!=\!\big(w_{n}^j(\lambda_n^j\!-\!\bar \lambda_{nt}^j\big), \forall n, j ) \in \R^{NM}$ is the vector of component-wise errors between $\bar f_t(\bm y)$ and $f(\bm y)$. Also, we denote $g(\bm{y})\!\in\!\R^{N+1}$ the constraint vector of \eqref{eq:P1-device-constraint}-\eqref{eq:P1-cloudlet-constraint}, and define
\begin{equation} 
\bar g_t(\bm y)=g\big(\bm y) + \delta_t(\bm y\big),
\end{equation}
with $\delta_t(\bm y)=\big(\delta_{nt}(\bm y), n=1,\ldots,N+1\big)$ and
\begin{align*}
&\delta_{nt}(\bm y)\!=B_n\!-\bar B_{nt}\!+\sum_{j=1}^M\!y_{n}^j\big( \bar o_{nt}\bar \lambda_{nt}^j - o_n \lambda_{n}^j \big), n=1,\ldots,N,
\end{align*}
\begin{align*}
&\delta_{{N+1,t}}(\bm y)=H\!-\bar H_{t}+\sum_{j=1}^M\sum_{n=1}^Ny_n^j\big(\bar h_{nt}\bar \lambda_{nt}^j - h_n\lambda_n^j \big). 
\end{align*}
$\bar B_{nt}\!=\!\sum_{\tau=1}^{t}B_{n\tau}/t$ is the running average of process $\{B_{nt}\}_{t=1}^{\infty}$, and similarly we define $\bar H_t$, $\bar o_{nt}$, and $\bar h_{nt}$. Note that $\bar f_t(\bm y), \bar g_t(\bm y)$ can be calculated at each slot, while $f(\bm y)$ and $g(\bm y)$ are unknown. We can now define a new problem:
\begin{equation}
\mathbb P_{2}(t):\,\,\,	\max_{\bm y\in [0,1]^{NM}} \bar f_t(\bm y) \,\,\,\,s.t.\,\,\,\, \bar g_t(\bm y)\preceq 0 \notag
\end{equation}
We will use the instances $\{ \mathbb P_2(t) \}_t$ to perform a dual ascent method and obtain a sequence of decisions $\{\bm y\}_t$ that will be applied in real time and achieve performance that converges asymptotically to the (unknown) solution of $\mathbb P_1$.

We first dualize $\mathbb P_2(t)$ and introduce the Lagrangian\footnote{For our system implementation, this relaxation means we install queues for the data transmission (at the devices) and image processing (at the cloudlet).}: 
\begin{align}
&L(\bm{y},\bm{\mu})\triangleq\bar f_t(\bm y) +  \bm{\mu}^\top \bar g_t(\bm y)=  \sum_{j=1}^{M}\sum_{n=1}^N w_n^jy_{n}^j\bar \lambda_{nt}^j + \notag\\
&\sum_{n=1}^N\mu_n\big( \sum_{j=1}^My_n^j\bar\lambda_{nt}^j\bar o_{nt}-\bar B_{nt} \big) + \xi\big( \sum_{j=1}^M\sum_{n=1}^Ny_{n}^j\bar \lambda_{nt}^j\bar h_{nt}-\bar H_t  \big) \notag
\end{align}
where $\bm{\mu}\!=\!(\mu_1, \mu_2, \ldots,\mu_N,\xi)$ are the non-negative dual variables for $\bar g_t(\bm y)\!\preceq\!0$. The dual function is:
\begin{equation}
V(\bm{\mu})=\arg\min_{\bm 0\preceq\bm{y}\preceq\bm 1}L(\bm{y},\bm{\mu}),  \label{eq:dual-calc}
\end{equation}
and the dual problem amounts to maximizing $V(\bm{\mu})$. 

We apply a dual ascent algorithm where the iterations are in sync with the system's time slots $t$. Observe that $V(\bm{\mu})$ does not depend on $\bar B_{nt}$ or $\bar H_t$, it is separable with respect to the primal variables, and independent of $\bar \lambda_{nt}^j$. Hence, in each iteration $t$ we can minimize $L$ by:
\begin{equation}
(y_{n}^{j})^*\!\in\!\arg\!\!\min_{y_{n}^{j}\in[0,1]}y_{n}^j(-w_{n}^j\!+\!\mu_{nt}\bar o_{nt}\! +\!\xi_{t} \bar h_{nt}),\,\, \forall\, n, j. \label{eq:sub0}
\end{equation}
This yields the following easy-to-implement threshold rule:
\begin{equation}
y_{nt}^{j}=
\begin{cases} 
1 & \text{if } \lambda_{nt}^j>0\,\,\text{ and } \mu_{nt}\bar o_{nt}+\xi_t \bar h_{nt}< w_{n}^j \\
0       & \text{otherwise.}
\end{cases} \label{eq:sub1}
\end{equation}
which is a deterministic decision that offloads (or not) all requests of each device (at each $t$). Then we improve the current value of $V_t(\bm{\mu})$ by updating the dual variables:
\begin{equation}
\mu_{n,t+1}=\Big[\mu_{nt}\!+\!\alpha\big(\sum_{j=1}^M\bar o_{nt}\bar \lambda_{nt}^jy_{nt}^{j}\!-\!\bar B_{nt}\big)\Big]^+,\,\, \forall\,n, \label{eq:sub2}
\end{equation}
\begin{equation}
\xi_{t+1}=\Big[\xi_t\!+\!\alpha\big(\sum_{n=1}^N\sum_{j=1}^M\bar h_{nt}\bar \lambda_{nt}^jy_{nt}^{j}-\bar H_t\big)\Big]^+, \label{eq:sub3}
\end{equation}
where $\alpha\!>\!0$ is the update step size, and return to \eqref{eq:sub0}. 

\begin{algorithm}
	\caption{OnAlgo}
	\label{OnAlgo}
	\begin{algorithmic}[1]
		\STATE \textbf{Initialization:} $t=0, \xi_0\!=\!0, \bm \mu_0\!=\!\bm 0, \bm y\!=\!\bm 0$ 
		\WHILE {True}
		\FOR {\textbf{each} device $n\in\mathcal{N}$}
		\STATE Receive objects $\mathcal{S}_{nt}=\{s_{nt}\}$;
		\STATE $\hat{\phi}_{nt},\, \sigma_{nt}  \leftarrow Q_n\big(J_n(s_{nt})\big),\,\,\forall s_{nt}\in\mathcal{S}_{nt}$
		\STATE Calculate $w_{nt}$ through \eqref{eq:accuracy};
		\STATE Observe $o_{nt}$, $h_{nt}$, $B_{nt}$ and calculate $\bar o_{nt}, \bar h_{nt}$, $\bar B_{nt}$;
		\FOR{$j=1,\ldots, M$}
		\STATE Observe $\lambda_{nt}^j$ and calculate average $\bar \lambda_{nt}^j$ and $w_n^j$;
		\STATE Decide ${y}_n^{j}$ by using \eqref{eq:sub1};
		\ENDFOR
		\STATE Update $\mu_{n,t+1}$ using \eqref{eq:sub2};
		\STATE Send averages $\bar \lambda_{nt}^j, \forall j$, to cloudlet;
		\ENDFOR	
		\STATE \underline{Cloudlet:}
		\STATE Compute tasks and receive $\bar \lambda_{nt}^{j}, \forall n$;
		\STATE Observe $H_t$ and calculate $\bar H_t$;
		\STATE Update $\xi_{t+1}$ using \eqref{eq:sub3}, and send it to devices;
		\STATE $t \leftarrow t+1$;		
		\ENDWHILE
	\end{algorithmic}
\end{algorithm}


The detailed steps that implement our online policy are as follows (with reference to \emph{OnAlgo}, Algorithm 1). Each device $n$ receives a group of objects $\mathcal{S}_{nt}$ in slot $t$ and uses its classifier to predict their classes, and the predictor to estimate the expected offloading gains (Steps 4-6). They update their statistics (step 7) and compare the expected benefits with the outsourcing costs (Step 10). Finally, they update their local dual variable for the power constraint violation (Step 12). The cloudlet classifies the received objects (Step 16) and updates its parameter estimates (Step 17) and its congestion (Step 18), which is sent to the devices. 

%
%
%
%
%
%
%
%
%
%
%
%
%
%
%
%
\vspace{-1mm}
\subsection{Performance Analysis}


The gist of our approach is that, as time evolves, the sequence of problems $\{\mathbb P_2(t)\}_t$ approaches our initial problem $\mathbb P_1$. This is true under the following mild assumption.
\begin{assumption} 
	\vspace{-1mm}	
	The perturbations of the system parameters are independent to each other, uniformly bounded, and their averages converge, e.g., $\lim_{t\rightarrow \infty}\bar B_{nt}=B_n$.
	\vspace{-1mm}
\end{assumption}
\noindent Under this assumption it is easy to see that it holds:
\begin{equation}
 \lim_{t\rightarrow\infty}\delta_t(\bm y)\!=\!0,\,\,\,\,\,\,\,\,\lim_{t\rightarrow \infty}\bm y^\top\epsilon_t\!=\!0, \,\,\,\,\,\,\forall\, \bm y. \notag 
\end{equation} 
Furthermore, note that due to boundedness of the parameters and $y_n^j\in[0,1], \forall n,j$ we have that:

\begin{equation}
	\norm{ g(\bm y)}_2\leq \sigma_{g},\,\,\,\,\,\,\,\,\, \norm{ \delta_t(\bm y) }_2\leq \sigma_{\delta_t}, 
	\forall t, \label{eq:var}
\end{equation}
and using Minkowski's inequality, we get the bound:
\begin{equation}
\norm{\bar g_t(\bm y)}=\norm{ g(\bm y) +\delta_t(\bm y) }_2\leq \sigma_{g} + \sigma_{\delta_t}.
\end{equation}
It is also easy to see that $\lim_{t\rightarrow\infty}\sigma_{\delta_t}=0$. The following Theorem is our main analytical result.
\begin{theorem}
Under Assumption 1, OnAlgo ensures the following optimality and feasibility gaps:
\begin{align*}
(i) \lim_{t\rightarrow \infty} f(\bm{\bar y_t})\leq f^* + \frac{a\sigma_g^2}{2},\quad (ii) \lim_{t\rightarrow \infty} g(\bm{\bar y_t})\preceq 0, \nonumber
\end{align*}
where $\bm{\bar y_t}=\frac{1}{t}\sum_{i=1}^t \bm y_i$.
\end{theorem}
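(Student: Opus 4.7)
The plan is to adapt the classical analysis of dual subgradient methods with primal averaging (in the spirit of \cite{nedic-subgrad-siam}) while carefully tracking the drift introduced by replacing the unknown problem $\mathbb P_1$ with its running-average surrogate $\mathbb P_2(t)$ at each iteration. I would split the argument into two pieces: (a) a per-iteration analysis of the projected dual ascent \eqref{eq:sub2}--\eqref{eq:sub3} applied to the sequence $\{\mathbb P_2(t)\}$, and (b) a limiting step that transfers bounds from $(\bar f_t,\bar g_t)$ to $(f,g)$ by invoking the vanishing errors $\epsilon_t,\delta_t$ guaranteed by Assumption~1.

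For (a), non-expansiveness of the projection in \eqref{eq:sub2}--\eqref{eq:sub3} yields the standard recursion
\begin{equation*}
\|\bm\mu_{t+1}\|_2^2 \le \|\bm\mu_t\|_2^2 + 2\alpha\,\bm\mu_t^{\top}\bar g_t(\bm y_t) + \alpha^2\|\bar g_t(\bm y_t)\|_2^2.
\end{equation*}
Because $\bm y_t$ minimizes the Lagrangian at $\bm\mu_t$, comparing with the optimum $\bm y^\star$ of $\mathbb P_1$ gives $\bar f_t(\bm y^\star)-\bar f_t(\bm y_t)\le \bm\mu_t^{\top}[\bar g_t(\bm y_t)-\bar g_t(\bm y^\star)]$. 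Telescoping the recursion over $t=1,\ldots,T$ and using \eqref{eq:var} to bound $\|\bar g_t(\bm y_t)\|_2\le \sigma_g+\sigma_{\delta_t}$ then produces the familiar two-for-one bound: an $O(1/(\alpha T))$ control on $(1/T)\sum_t \bm\mu_t^{\top}\bar g_t(\bm y_t)$ together with a residual of $\alpha\sigma_g^2/2 + o(1)$. Since $\bar f_t$ and $\bar g_t$ are affine in $\bm y$, I can move the Ces\`aro average inside to obtain direct statements about $\bar{\bm y}_T=(1/T)\sum_{t=1}^T \bm y_t$; invoking $\lim_t\delta_t(\bm y)=0$ and $\lim_t \bm y^{\top}\epsilon_t=0$ then replaces $\bar f_T(\bar{\bm y}_T)$ by $f(\bar{\bm y}_T)$ and $\bar g_T(\bar{\bm y}_T)$ by $g(\bar{\bm y}_T)$ in the limit, yielding both (i) and (ii).

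The main obstacle is the ``moving target'' nature of the scheme: unlike classical primal averaging, each iteration acts on a different problem, so the telescoping picks up cross terms of the form $\bm\mu_t^{\top}[\bar g_t-\bar g_{t-1}]$ and $\bar f_t(\bm y^\star)-f(\bm y^\star)$. Controlling them requires two ingredients I expect to derive en route: a uniform bound on $\|\bm\mu_t\|_2$, which follows by induction from the dual update together with \eqref{eq:var}; and a Kronecker/Ces\`aro-type argument showing that the running averages of the bounded-and-vanishing sequences $\{\epsilon_t\}$ and $\{\delta_t(\bm y_t)\}$ are themselves $o(1)$. Once both hold, the extra drift contributes $o(1)$ to the averages and washes out in the limit, leaving the step-size-controlled gap $\alpha\sigma_g^2/2$ as the sole residual --- precisely the ``approximate'' flavor of the dual iteration announced in the introduction.
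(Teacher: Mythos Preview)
Your proposal follows essentially the same Nedic--Ozdaglar primal-averaging route as the paper: expand $\|\bm\mu_{t+1}\|^2$ from the projected update, telescope, exploit Lagrangian optimality of $\bm y_t$, pass to $\bar{\bm y}_t$ via affinity of $f,g$, and let the running-average errors $\epsilon_t,\delta_t$ vanish under Assumption~1. Two minor points where the paper is slightly cleaner than what you sketch: (a) for part (i) it uses the dual-value chain $V(\bm\mu^\star)\ge L(\bm y_i,\bm\mu_i)$ together with the $\theta=\bm 0$ instance of the distance recursion, which bounds $\tfrac{1}{t}\sum_i \bm\mu_i^\top \bar g_i(\bm y_i)$ directly by dropping $\|\bm\mu_{t+1}\|^2\ge 0$ --- so no a-priori uniform bound on $\|\bm\mu_t\|$ is needed for the optimality gap; and (b) the ``cross terms'' $\bm\mu_t^\top[\bar g_t-\bar g_{t-1}]$ you anticipate do not actually arise, since the telescoping only sums the per-step quantities $\bm\mu_t^\top \bar g_t(\bm y_t)$ and the comparison (whether through $V(\bm\mu^\star)=f^\star$ as in the paper or through $\bm y^\star$ as in your version) contributes only the individually vanishing errors $\epsilon_i,\delta_i(\bm y^\star)$, whose Ces\`aro means go to zero. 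The uniform dual bound via a Slater point is still needed, but only for part~(ii), exactly as you and the paper both invoke it.
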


\begin{proof}
	We drop bold typeface notation here, and use subscript $i=1,\ldots,t$ to denote the $i$-th slot.  We first bound the distance of $\mu_{t+1}$ from vector $\theta\in \mathbf R^{N+1}$, i.e., $\norm{ \mu_{t+1}-\theta }_{2}^2 = $
	\begin{align}
	&\norm{ [\mu_t+a\big( g(y_t)+\delta_t(y_t)\big) ]^+ -\theta }_{2}^2 \leq \nonumber\\
	&\norm{ \mu_t-\theta }_{2}^2 \!+\!a^2\norm{ g(y_t)}_{2}^2\!+\! a^2\norm{ \delta_t(y_t)}_{2}^2 \!+\! 2a^2\delta_t(y_t)^\top g(y_t)+ \nonumber \\ 
	&2a (\mu_t\!-\!\theta)^\top\big(g(y_t)\!+\!\delta_t(y_t)\big). \label{eq:15} 
	\end{align}
	
	\underline{(i) Optimality Gap}. From the dual problem we can write:
	\begin{align}
	V(\mu^*)&\geq \frac{1}{t}\sum_{i=1}^tV(\mu_i)\geq \frac{1}{t}\sum_{i=1}^t L(y_i,\mu_i) \nonumber \\
	&=\!\frac{1}{t}\sum_{i=1}^t\Big( f(y_i)\!+\!y_i^\top\epsilon_i\!+\!\mu_i^\top\big(g(y_i)\!+\!\delta_i(y_i)\big) \Big) \nonumber \\
	&\geq f(\bar y_t)+\frac{1}{t}\sum_{i=1}^ty_i^\top\epsilon_i + \frac{1}{t}\sum_{i=1}^t\Big( \mu_i^\top\big( g(y_i)+\delta_i(y_i) \big) \Big), \label{eq:dual_expansion}
	\end{align}
	where the last inequality follows from Jensen's inequality. Now, let $\theta=0$ in \eqref{eq:15}. Using \eqref{eq:var} and the Cauchy-Swartz inequality, and by summing over all $t$ we obtain :
	\begin{align}
	\norm{ \mu_{t+1} }_2^2 &\leq \norm{ \mu_1 }_2^2 + a^2t\sigma_g^2+ a^2\sum_{i=1}^t\sigma_{\delta_t}^2 + \nonumber \\
	&2a^2\sigma_g\sum_{i=1}^t\sigma_{\delta_t} + 2a\sum_{i=1}^t \mu_i^\top \big( g(y_i)+\delta_i(y_i) \big). \nonumber
	\end{align}
	Dropping the non-negative term $\norm{ \mu_{t+1} }_2^2$, dividing by $2at$, setting $\mu_1=0$, and rearranging terms, yields:
	\begin{align*}
	-\frac{1}{t}\sum_{i=1}^t\mu_i^\top\big(g(y_i)\!+\!\delta_i(y_i) \big) &\leq \frac{a\sigma_g^2}{2}\!+\!\frac{a}{2t}\sum_{i=1}^t\sigma_{\delta_i}^2\!+\!\frac{a\sigma_g}{t}\sum_{i=1}^t\sigma_{\delta_i}.
	\end{align*}
	Using the fact that $V(\mu^*)=f^*$, and combining the above with \eqref{eq:dual_expansion}, we obtain:
	\begin{align*}
	f(\bar y_t)\!-\!f^* &\leq -\frac{1}{t}\sum_{i=1}^ty_i^\top\epsilon_i\!+\! \frac{a\sigma_g^2}{2}\!+\! \frac{a}{2t}\sum_{i=1}^t\sigma_{\delta_i}^2\!+\!\frac{a\sigma_g}{t}\sum_{i=1}^t\sigma_{\delta_i}.
	\end{align*}
	All sums have diminishing terms and divided by $t$, hence converge to $0$. Thus, we obtained the first part of the theorem.
	
	\underline{(ii) Constraint Violation}. If we apply recursively the dual variable update rule, we obtain:
	\begin{align*}
	\mu_{t+1}\!=\!\Big[ \mu_t\!+\!a\big( g(y_t)\!+\!\delta_t(y_t) \big) \Big]^+\!\succeq\!\mu_1\!+\!a\sum_{i=1}^t\big( g(y_i)\!+\!\delta_i(y_i) \big).
	\end{align*}
	Setting $\mu_1=0$, dividing by $at$, and using Jensen's inequality for $g(\cdot)$, we get:
	\begin{align}
	g(\bar y_t) + \frac{1}{t}\sum_{i=1}^t\delta_i(y_i) \preceq \frac{\mu_{t+1}}{at}. \label{eq:constraint-bound}
	\end{align}
	The second term of the LHS converges to zero as $t\rightarrow \infty$. Our claim holds if the same is true for the RHS. Indeed, this is the case assuming the existence of a Slater vector, and the boundedness of the set of dual variables (see~\cite{nedic-subgrad-siam,appendix}).
\end{proof}



\noindent The theorem shows that OnAlgo asymptotically achieves zero feasibility gap (no constraint violation), and a fixed optimality gap that can be made arbitrarily small by tuning the step size. 


\section{Implementation and Evaluation} \label{sec:evaluation}


\subsection{Experimentation Setup and Initial Measurements}
 
\subsubsection{Testbed and Measurements} We used 4 Raspberry Pis (RPs) as end-nodes, placed in different distances from a laptop (cloudlet). We used a Monsoon monitor for the energy measurements, and Python libraries and TensorFlow for the classifiers.\footnote{We used \emph{vanilla} versions of the classifiers to facilitate observation of the results. The memory footprint of NNs can be made smaller \cite{NN-compression} but this might affect their performance. Our analysis is orthogonal to such interventions.} We first measured the average power consumption when RPs transmit data to the cloudlet with different rates, and then fitted a linear regression model that estimates the consumed power as a function of $r$. This model is used by OnAlgo to estimate the energy cost for each transmitted image, given the data rate in each slot (which might differ for the RPs). Also, we measured the average computing costs ($h_n,h_0$ cycles/task) of the classification tasks, to be used in simulations. For more details on the setup, see \cite{appendix}.

\subsubsection{Data Sets and Classifiers} We use two well-known datasets: \emph{(i)} {MNIST}~\cite{lecun1998gradient} which consists of $28\!\times\!28$ pixel handwritten digits, and includes $60$K training and $10$K test examples; \emph{(ii)} {CIFAR-10}~\cite{Krizh2009Learn} with $50$K training and $10$K test examples of $32\!\times\!32$ color images of $10$ classes. We used two classifiers, the normalized-distance weighted \textit{k}-nearest neighbors (KNN)~\cite{dudani1976distance}, and the more sophisticated Convolutional Neural Network (CNN) implemented with TensorFlow \cite{tensorflow}. They output a vector with the probabilities that the object belongs to each class. These classifiers have different performance and resource needs, hence allow us to build diverse experiments. The predictors are trained with labeled images and the outputs of the local ($f_n$) and cloudlet ($f_0$) classifiers. These are the independent variables in our regression model that estimates $\phi_{nt}$ (dependent variables). Recall that the latter are calculated using \eqref{eq:predictor-gain}, where we additionally use that $w_{nt}=d_0(s_{nt})$ if device $n$ has given a wrong classification and $w_{nt}=-d_0(s_{nt})$ if the cloudlet is mistaken. 

\begin{figure*}[h]
	\centering
	\begin{subfigure}[b]{0.23\linewidth}
		\includegraphics[scale=0.26]{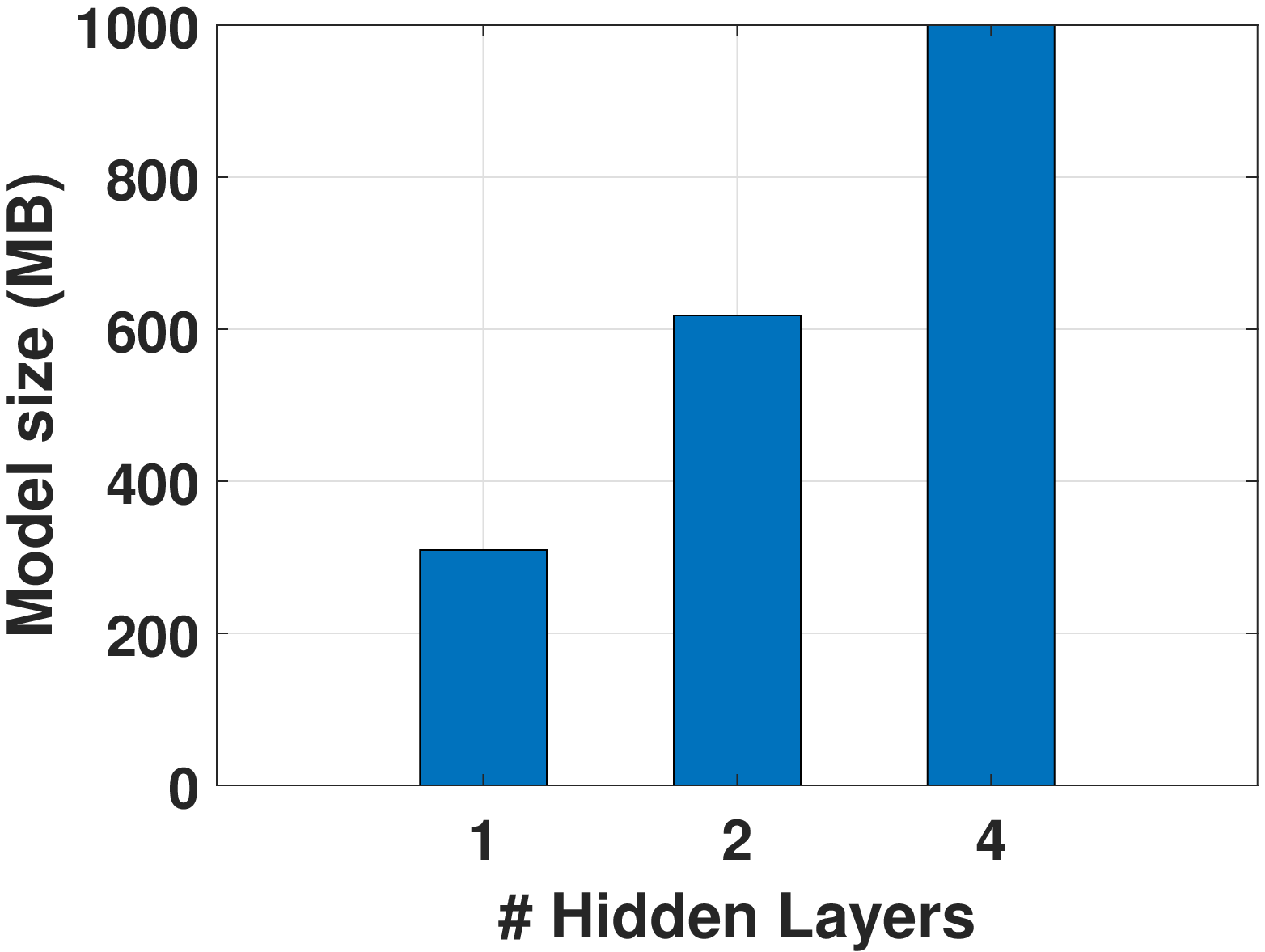}
		\caption{Memory usage of CNN}
	\end{subfigure}
	~
	\begin{subfigure}[b]{0.23\linewidth}
		\includegraphics[scale=0.26]{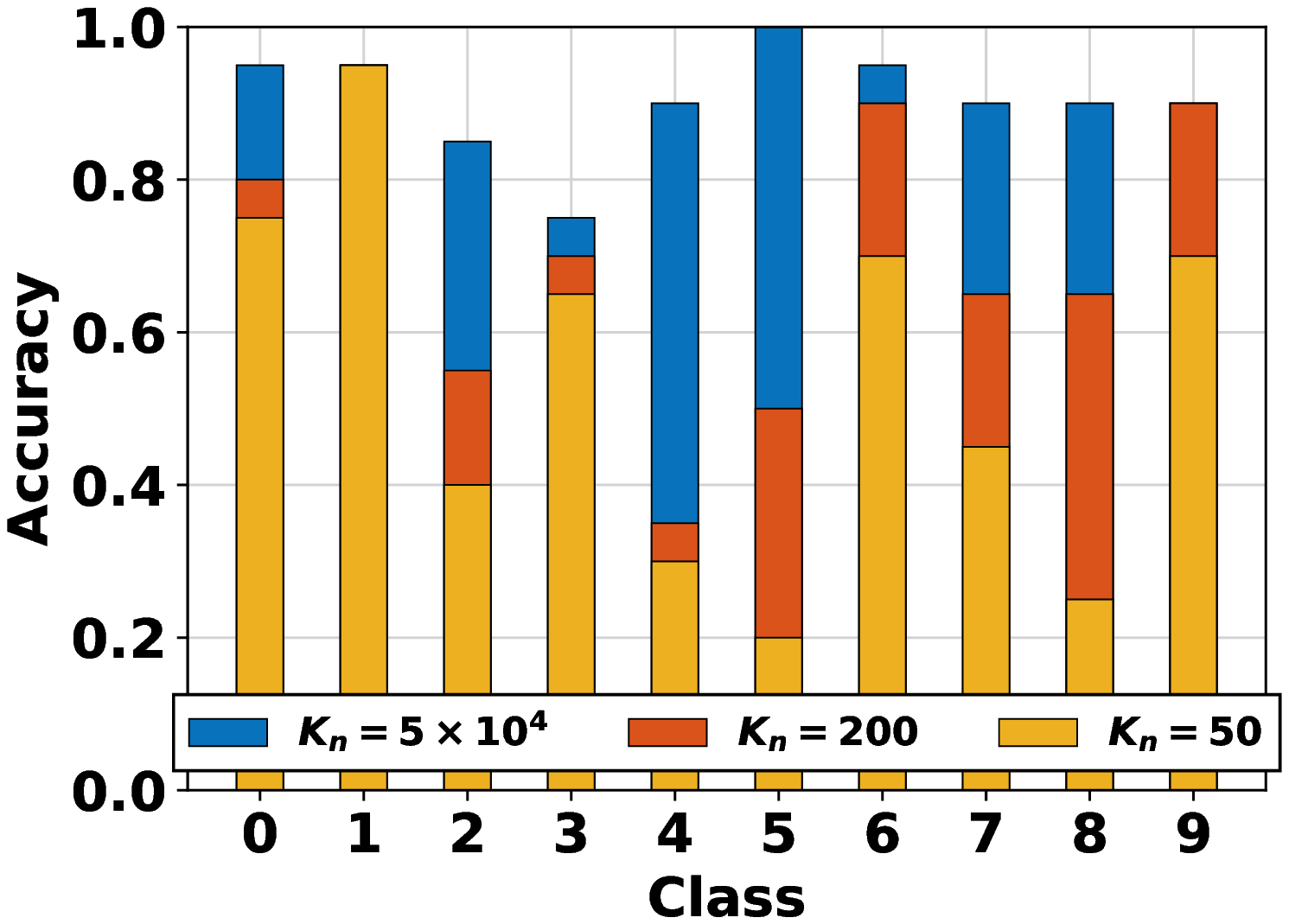}
		\caption{KNN on MNIST}
	\end{subfigure}
	~
	\begin{subfigure}[b]{0.23\linewidth}
		\includegraphics[scale=0.27]{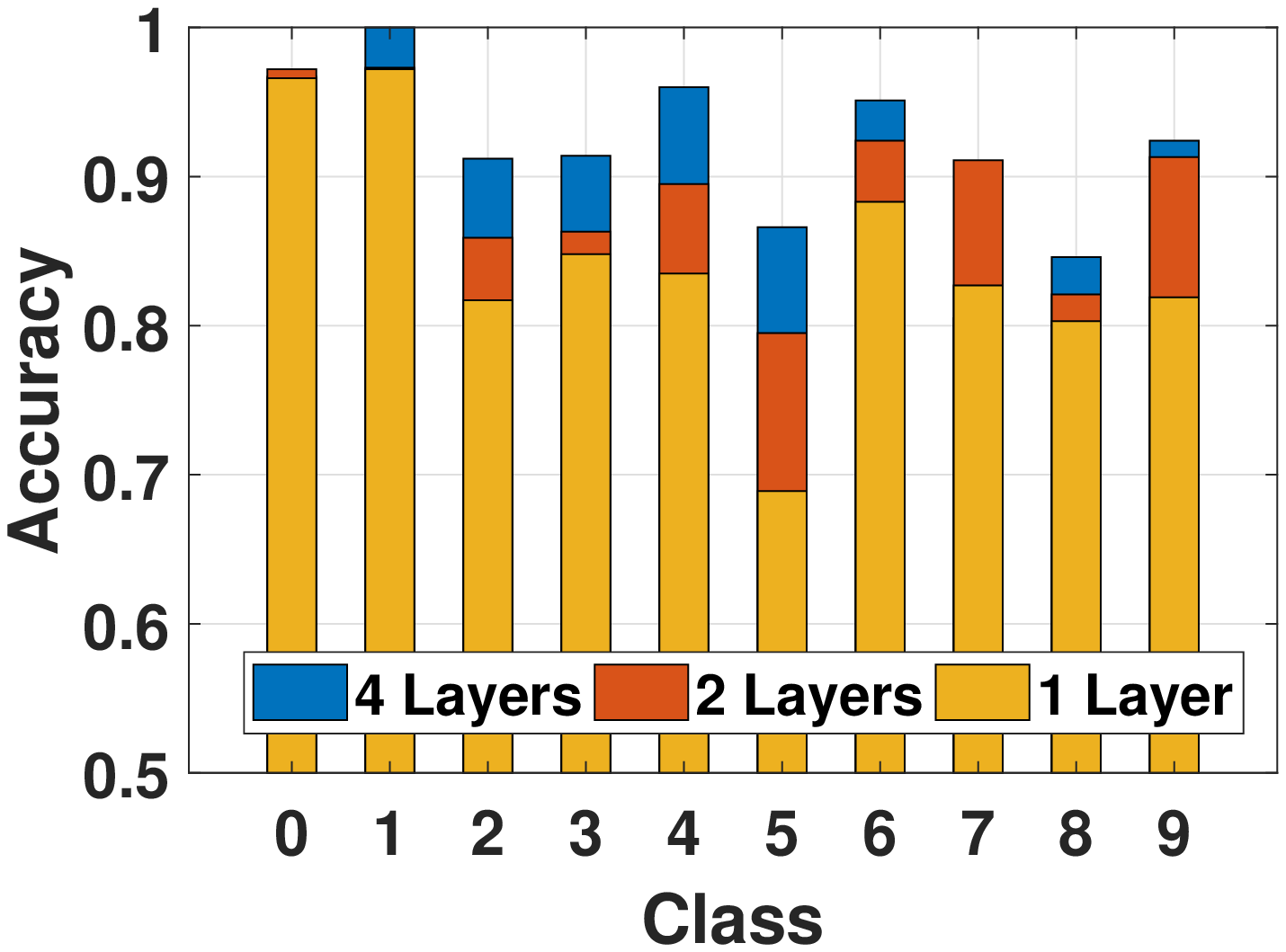}
		\caption{CNN on MNIST}
	\end{subfigure}
	~	
	\begin{subfigure}[b]{0.23\linewidth}
		\includegraphics[scale=0.275]{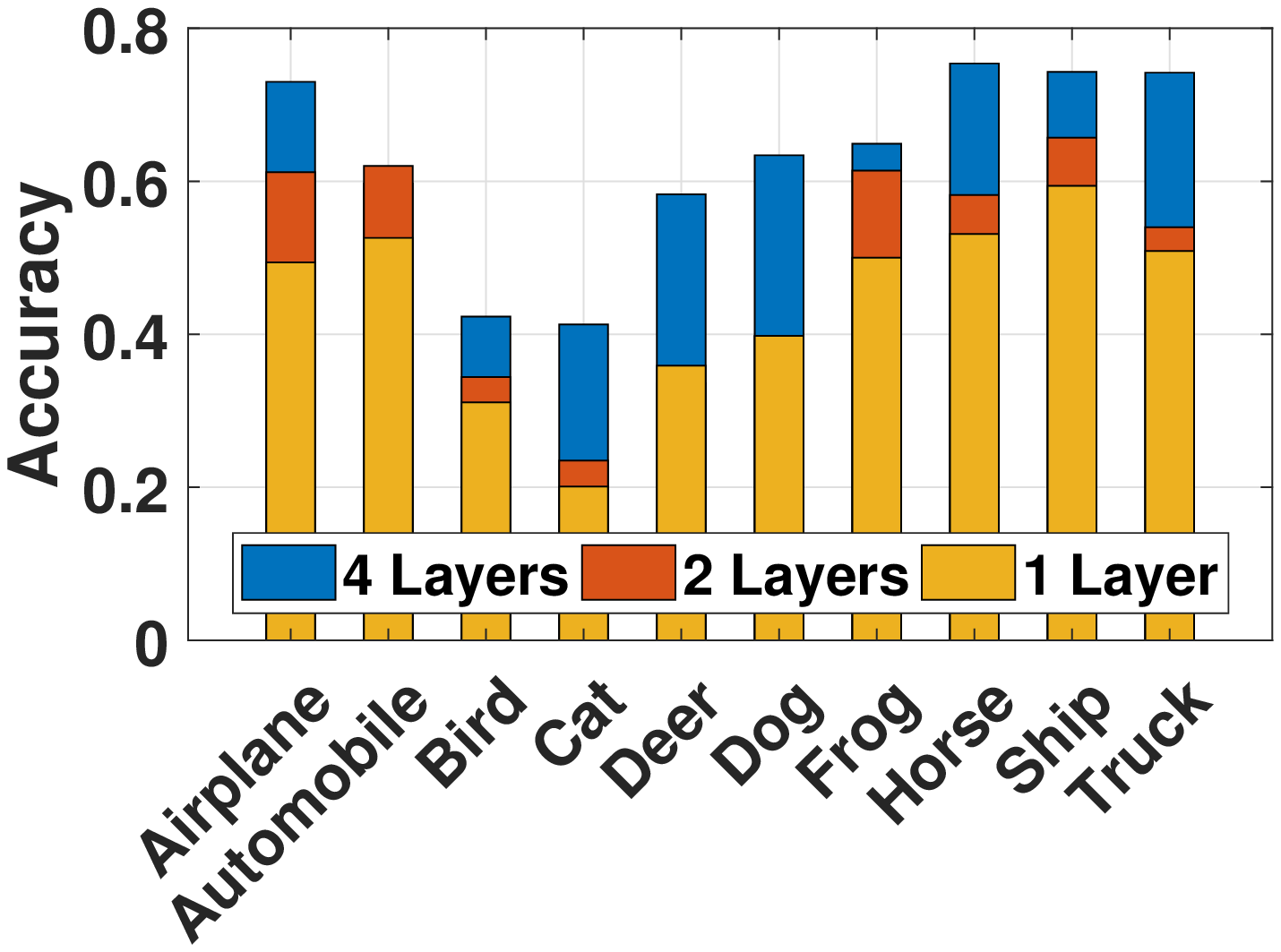}
		\caption{CNN on CIFAR}
	\end{subfigure}
	\caption{CNN memory usage vs number of layers, and accuracy of MNIST and CIFAR-10 for KNN and CNN classifiers.} 
	\label{fig:AccuracyPerClass}
	\vspace{-2mm}
\end{figure*}

\subsubsection{Benchmarks} We compare OnAlgo with two algorithms. The \emph{Accuracy-Threshold Offloading} (ATO) algorithm, where a task is offloaded when the confidence of the local classifier is below a threshold, without considering the resource consumption. And the \emph{Resource-Consumption Offloading} (RCO) algorithm, where a task is offloaded when there is enough energy, without considering the expected classification improvement.

\subsubsection{Limitations of Mobile Devices} We used our testbed to verify that these small resource-footprint devices require the assistance of a cloudlet. Our findings are in line with previous studies, e.g., \cite{DDNN_1}. The performance of a CNN model increases with the number of layers. We find that, even with $4$ layers, a CNN trained for CIFAR has $1$GB size and hence cannot be stored in the RPs (see Fig.~\ref{fig:AccuracyPerClass}a). Similar conclusions hold for the KNN classifier that needs to locally store all training samples. Clearly, despite the successful efforts to reduce the size of ML models by, e.g. using compression~\cite{NN-compression}; the increasingly complex analytics and the small form-factor of devices will continue to raise the local versus cloudlet execution trade off.

\begin{figure}[t!]
	\centering
	\begin{subfigure}[b]{0.47\linewidth}
		\centering
		\includegraphics[scale=0.22]{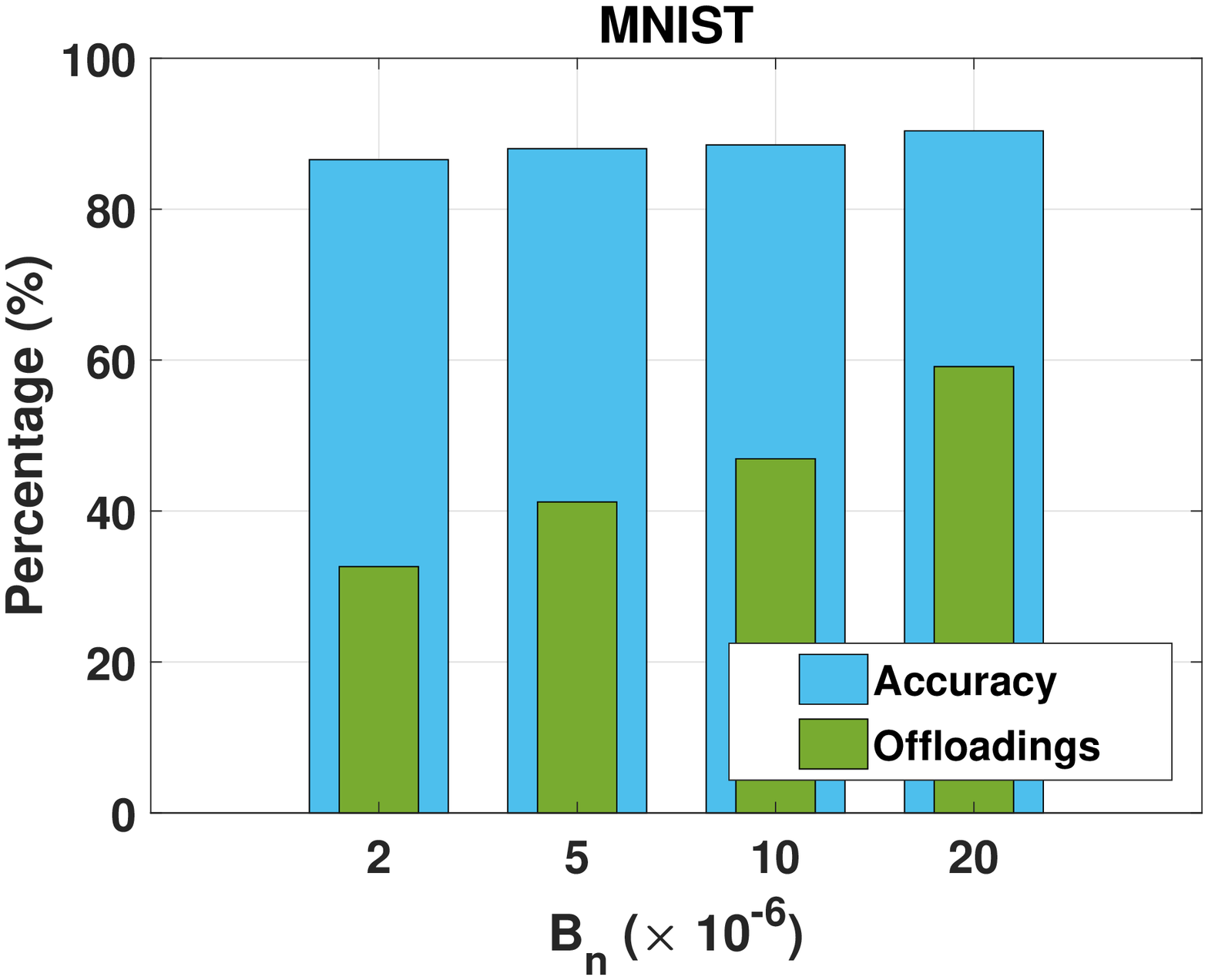}
	\end{subfigure}
	~
	\begin{subfigure}[b]{0.47\linewidth}
		\centering
		\includegraphics[scale=0.22]{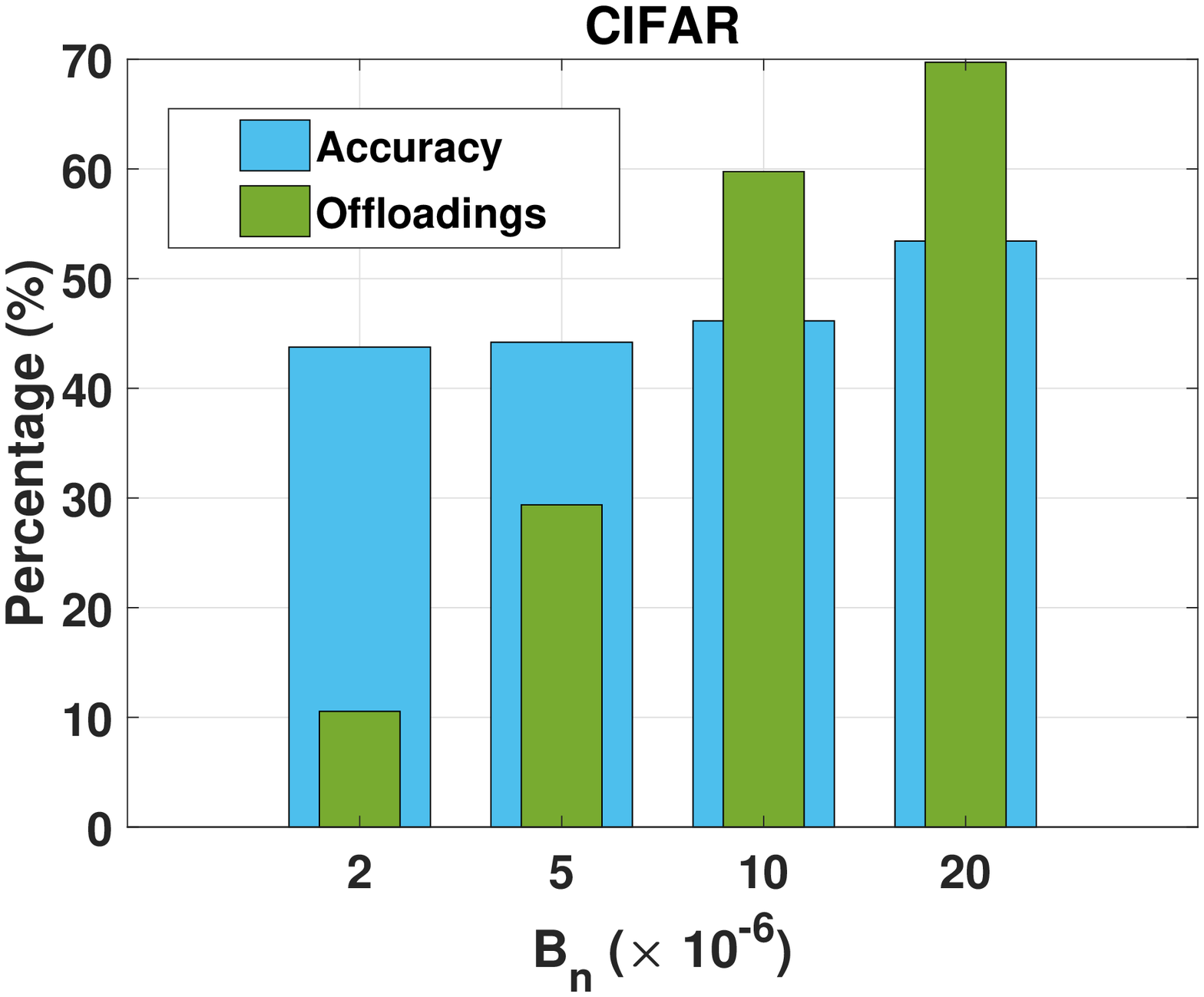}
	\end{subfigure}
	\vspace{-1mm}
	\caption{Average accuracy and outsourcing of OnAlgo for different power budgets, on MNIST (left) and CIFAR (right).}
	\label{fig:Resource_results}
	\vspace{-2mm}
\end{figure}

\subsubsection{Classifier Assessment} In Fig.~\ref{fig:AccuracyPerClass}b we see that the accuracy (ratio of successful over total predictions) of the KNN classifier improves with the size $K_n$ of labeled data. Figure~\ref{fig:AccuracyPerClass}c presents the accuracy gains for CNN as more hidden layers are added. The gains are higher (up to 20\%) for the digits that are difficult to recognize, e.g., $4$ and $5$. Fig.~\ref{fig:AccuracyPerClass}d shows the CNN performance on CIFAR, which is lower as this is a more complex dataset (colored images, etc.). Overall, we see that the classifier performance depends on the algorithm (KNN, CNN), the settings (datasets, layers), and the objects.

\subsection{Performance Evaluation}

\subsubsection{Resource Availability Impact}  Fig. \ref{fig:Resource_results} shows the average accuracy and fraction of requests offloaded to the cloudlet with OnAlgo when we vary their power budget. As $B_n$ increases there are more opportunities to use the cloudlet (4-layer CNN) and obtain more accurate classifications than the local classifier (1-layer CNN). Furthermore, Fig. \ref{fig:AccuracyPerClass}(c-d) show that MNIST is easier to classify and the gains of using a better classifier are smaller than with CIFAR. Hence, as $B_n$ increases in Fig. \ref{fig:Resource_results} the ratio of offloaded tasks increases at a faster pace with CIFAR than with MNIST.

\subsubsection{Comparison with Benchmarks} We compare OnAlgo to ATO and RCO. No-offloading (NO) serves as a baseline for these algorithms in Fig. \ref{fig:comparison_1}. To ensure a realistic comparison, we set the rule for all algorithms that the cloudlet will not serve any task if the computing capacity constraint is violated. For RCO, the availability of energy is determined by computing the running average consumption at each device during the experiment. We employ two testbed scenarios, and a simulation with larger number of devices.

\textbf{Scenario 1:} \textit{Low accuracy improvement; high resources}. We set\footnote{We have explicitly set a small power budget so as to highlight the impact of power constraints on the system performance; higher power budgets will still be a bottleneck for higher task request rates or images of larger size.} $B_n\!=\!0.01mW$ and $H\!=\!2GHz$ allowing the devices to offload many tasks, and the cloudlet to serve most of them; and used MNIST (has small improvement). We demonstrate the average accuracy and power consumption in Fig. \ref{fig:comparison_1}a, where we see that OnAlgo outperforms both ATO and RCO by $5\%$. Regarding power consumption, ATO achieves the best result since it gets high enough confidence on its local classifier (rarely offloads). RCO however, offloads almost every task as it has enough resources and does not refrain even when improvement is low. The reason it achieves lower accuracy than onAlgo is that it does not offload intelligently, and gets denied when the computing constraint is violated.

\begin{figure}[t!]
	\centering
	\begin{subfigure}[b]{0.47\linewidth}
		\centering
		\includegraphics[scale=0.2]{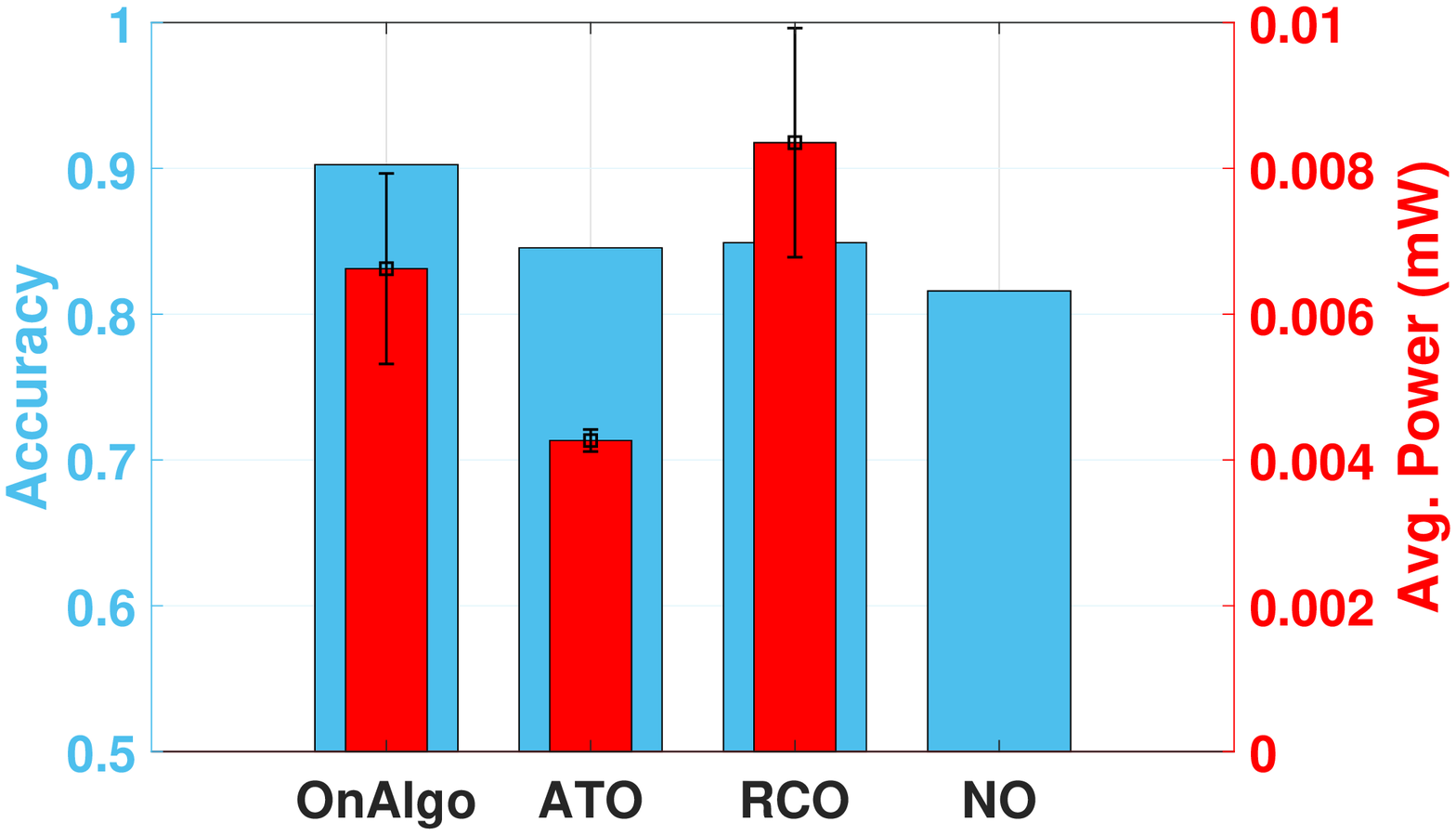}
		\caption{Scenario 1}
	\end{subfigure}
	~
	\begin{subfigure}[b]{0.47\linewidth}
		\centering
		\includegraphics[scale=0.2]{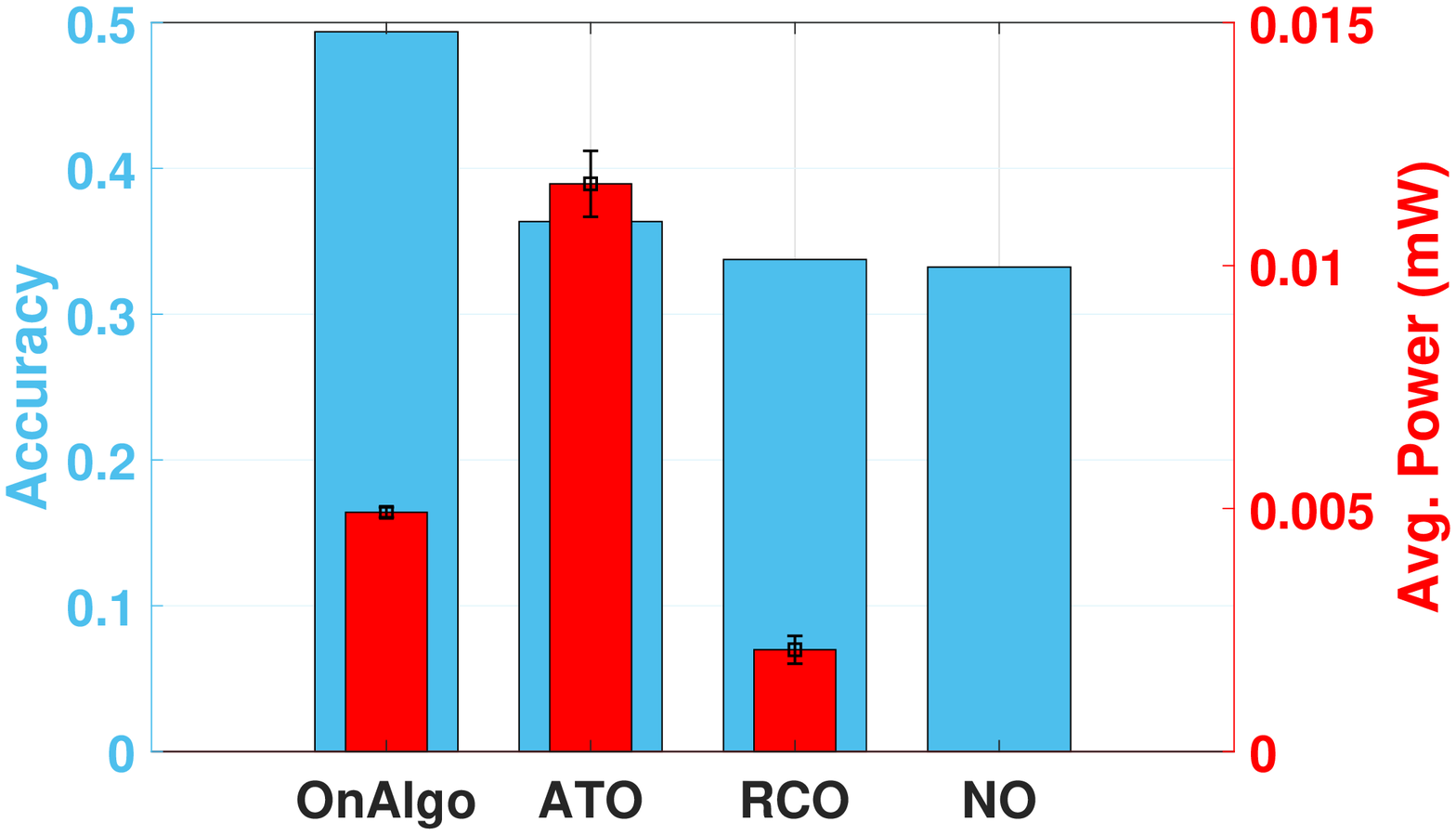}
		\caption{Scenario 2}
	\end{subfigure}
	\vspace{-1mm}
	\caption{Performance comparison of the offloading algorithms.}
	\label{fig:comparison_1}
	\vspace{-2mm}
\end{figure}

\textbf{Scenario 2:} \textit{High accuracy improvement; low resources}. We set $B_n\!=\!0.005mW$ and $H\!=\!200MHz$ not allowing many offloadings and cloudlet classifications. We used the CIFAR dataset which has a large performance difference between local and cloudlet classifiers. We see from Fig.~\ref{fig:comparison_1}b that OnAlgo achieves 28\%-32\% higher accuracy than both competing algorithms. RCO is constrained to very few offloadings due to the limited power budget, while ATO is resource-oblivious and offloads tasks regardless of the cloudlet's capacity. This results in many denied offloadings that reduce ATO's accuracy and unnecessarily increase the power consumption. OnAlgo consumes 60\% less power than ATO as it frequently offloads its low-confidence tasks. 

\textbf{Scenario 3:} \textit{Large number of users}. Finally, we simulated the algorithms for a large number of users while using the experimentally measured parameters. We observe in Fig. \ref{fig:Simulation}a that the accuracy gradually drops (for all algorithms) since now a smaller percentage of the tasks can be served by the cloudlet. OnAlgo constantly outperforms both ATO and RCO by about $10\%-25\%$ since it adapts to the available resources. This is more evident in Fig.~\ref{fig:Simulation}b that shows the fast-increasing energy cost of the two benchmark algorithms, as they either offload tasks that do not improve the performance, or offload tasks while the cloudlet is already congested (these tasks are dropped and energy is wasted). Power consumption of OnAlgo is up to $50\%$ less than that of RCO.

\begin{figure}[t]
	\centering
	\begin{subfigure}[b]{0.47\linewidth}
		\centering
		\includegraphics[scale=0.24]{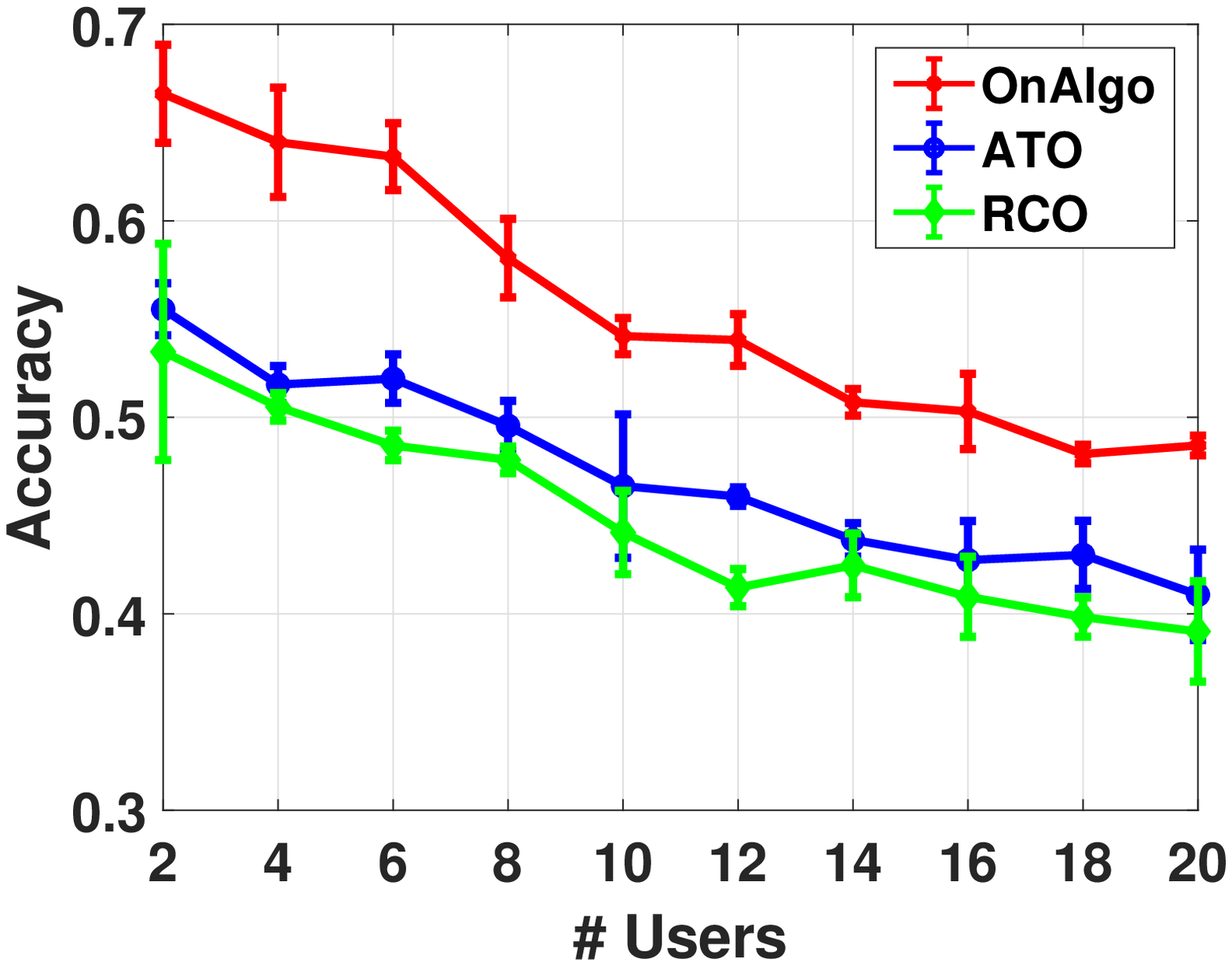}
		\caption{Accuracy Comparison}
	\end{subfigure}
	\begin{subfigure}[b]{0.47\linewidth}
		\centering
		\includegraphics[scale=0.24]{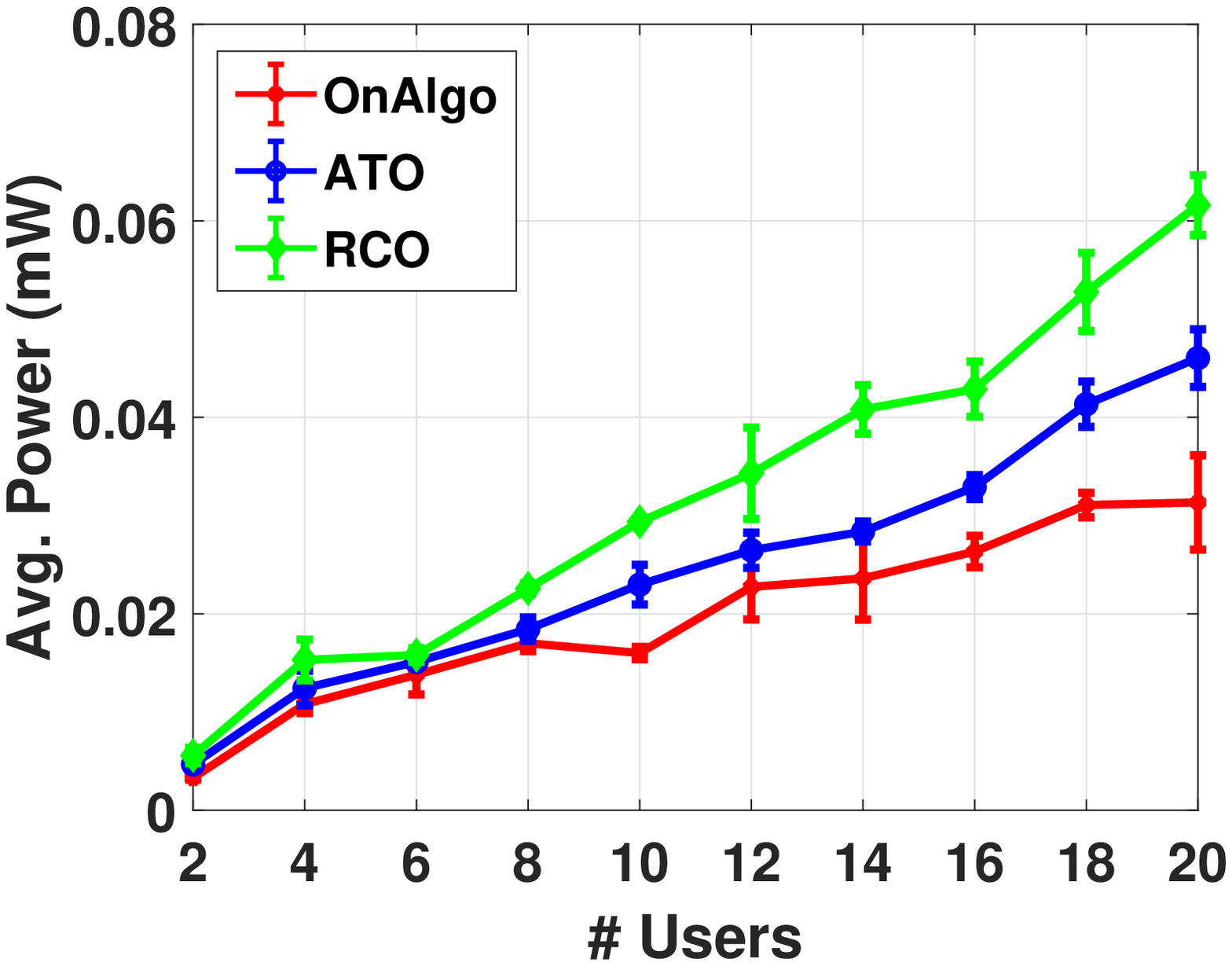}
		\caption{Power Cost Comparison}
	\end{subfigure}
	\vspace{-1.5mm}
	\caption{Simulation results for increasing number of users on the CIFAR dataset. $B_n = 0.02\ mW, H = 2\ GHz$.}
	\label{fig:Simulation}
	\vspace{-2mm}
\end{figure}

\subsubsection{Convergence of OnAlgo} Fig.~\ref{fig:convergence} presents the convergence of OnAlgo for different step sizes $\alpha$. Based on the system parameters the bound given by Theorem 1 is approximately 0.01, 0.2 and 1 for the three $\alpha$ values of Fig.~\ref{fig:convergence}. These are satisfied by the solution of OnAlgo in less than 300 iterations as observed in Fig.~\ref{fig:convergence}a. The convergence is faster for larger $\alpha$, which however is achieved at the cost of smaller convergence accuracy. The constraint violation bound is also respected as shown in Fig.~\ref{fig:convergence}b with the constraints being violated more often for small $\alpha$ in the beginning, but improving as $T$ increases.

\begin{figure}[t!]
	\centering
	\begin{subfigure}[b]{0.47\linewidth}
		\centering
		\includegraphics[scale=0.3]{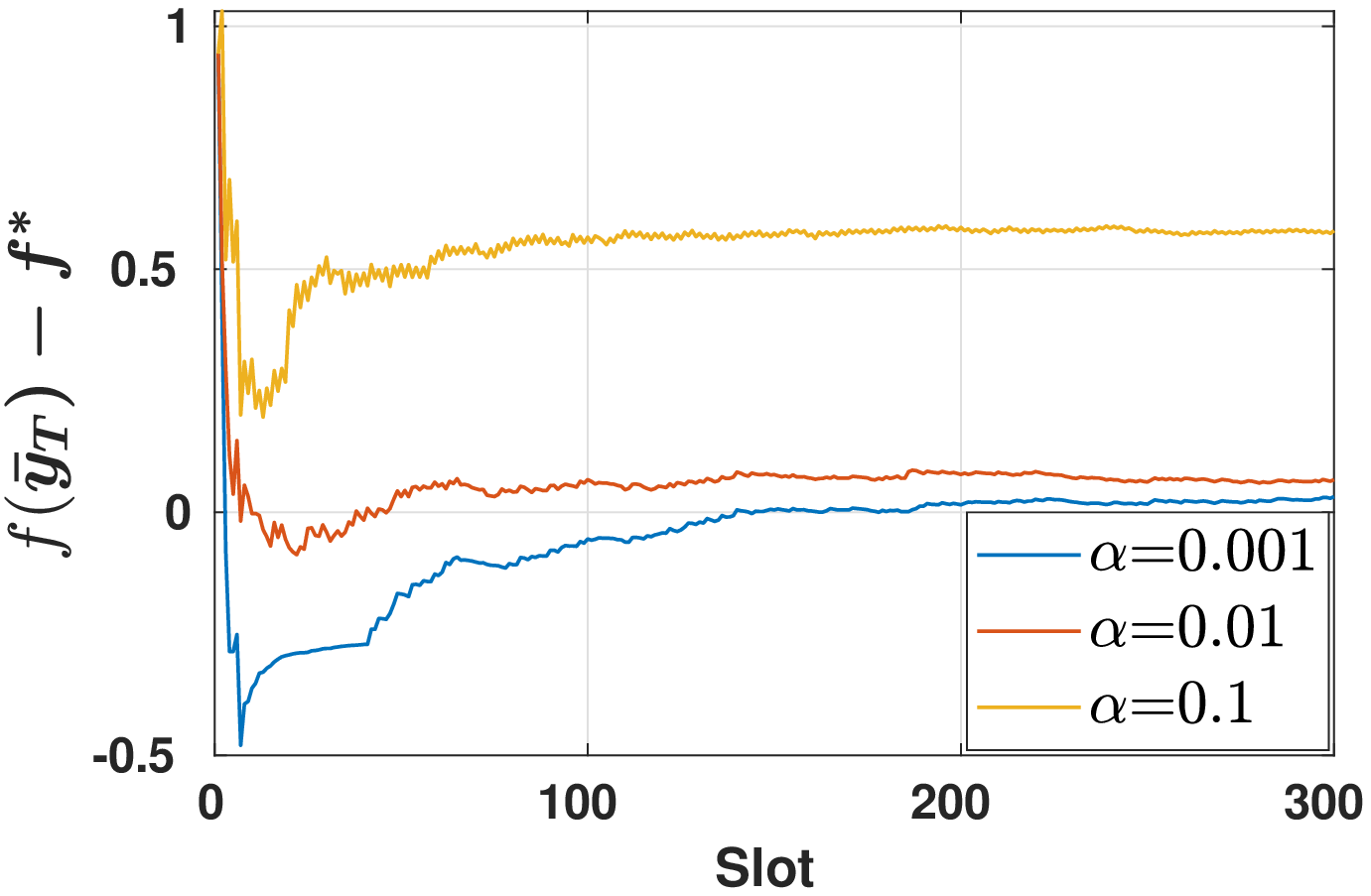}
		\caption{Optimality gap}
	\end{subfigure}
	~
	\begin{subfigure}[b]{0.47\linewidth}
		\centering
		\includegraphics[scale=0.3]{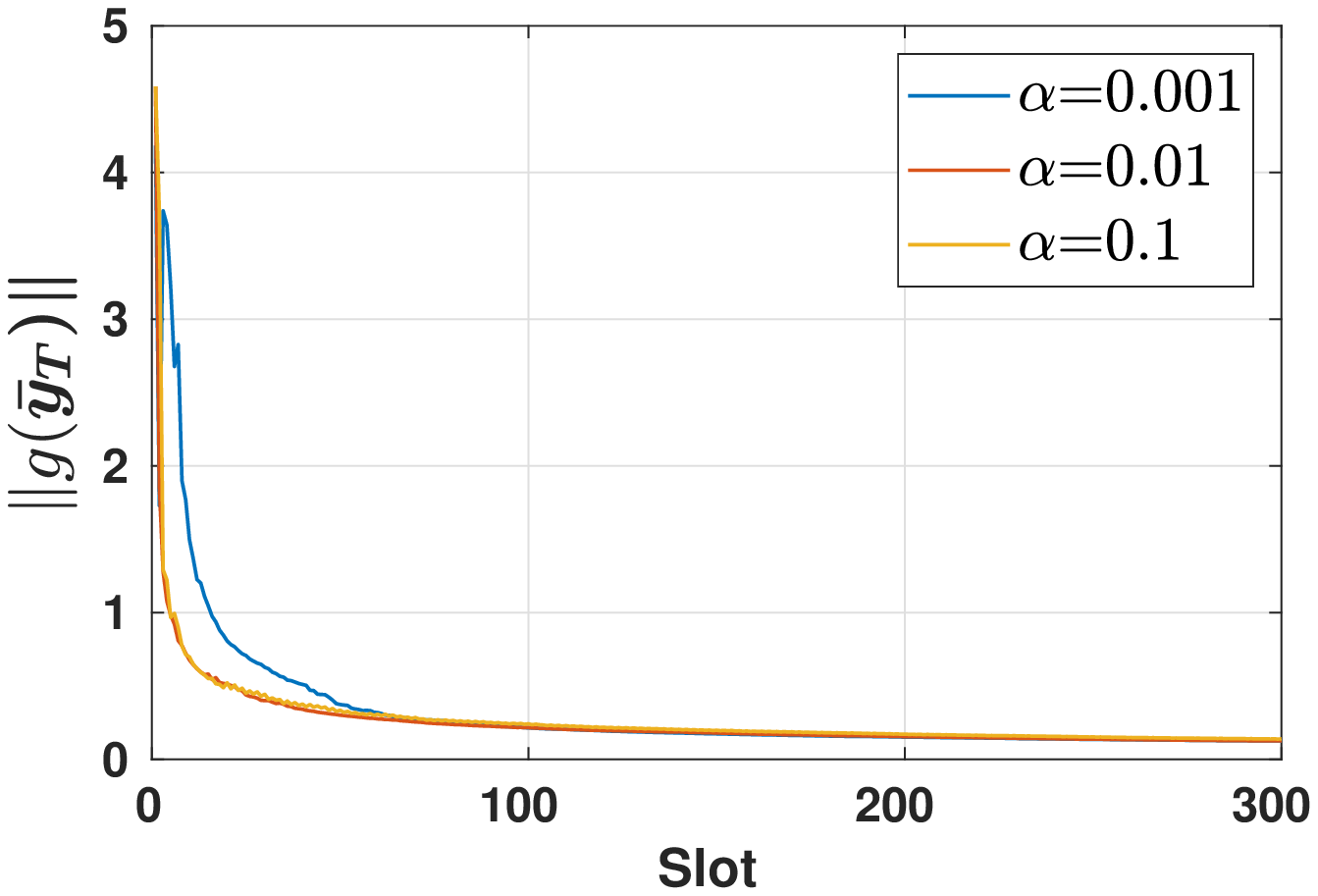}
		\caption{Constraint violation}
	\end{subfigure}
	\caption{The convergence properties of OnAlgo when $M=6, N=5$. Note that constraints are eventually satisfied, with some of them in a strict fashion (hence the norm is not zero).}
	\label{fig:convergence}
	\vspace{-2mm}
\end{figure}

\section{Related Work} \label{sec:related}
\textbf{Edge \& Distributed Computing}. Most solutions partition compute-intense mobile applications and offload them to the cloud~\cite{CloneCloud}; a solution that is unfit to enable low-latency applications. Cloudlets on the other hand, achieve lower delay~\cite{Cloudlets} but have limited serving capacity, hence there is a need for an intelligent offloading strategy that we propose here.  
%
Previous works consider simple performance criteria, such as reducing computation loads~\cite{Misco}, or power consumption~\cite{Selective_offloading} and focus on the architecture design. Also, Mobistreams~\cite{MobiStreams} and Swing~\cite{Swing} focus on collaborative data stream computations. The above systems either do not optimize the offloading policy, or use heuristics that do not cater for task accuracy.


\textbf{Mobile and IoT Analytics}. The importance of analytics has motivated the design of wireless systems that can execute such tasks. For instance, \cite{vrarDeep,deepdecision} tailor deep neural networks for execution in mobile devices,
while \cite{mobiqor} and \cite{hetero-edge} minimize the execution time for known system parameters and task loads. 
Finally, \cite{Li_19,Live_Analytics,Smart_cities} leverage the edge architecture to effectively execute analytics for IoT devices.
The plethora of such system proposals, underlines the necessity for our \emph{online decision framework} that provides optimal execution of analytics.

\textbf{Optimization of Analytics}. Prior works in computation offloading focus on different metrics such as number of served requests,  \cite{xu-ToN, letaief-edge-tutorial}, and hence are not applicable here. In our previous work~\cite{wiopt}, we proposed a \emph{static} collaborative optimization framework, which does not employ predictions nor accounts for computation constraints. Other works, e.g.~\cite{deepdecision} either rely on heuristics or assume static systems and known requests. Clearly, these assumptions are invalid for many practical cases where system parameters not only vary with time, but often do not follow i.i.d. processes. This renders the application of max-weight type of policies \cite{tassiulas-book} inefficient. Our approach is fundamentally different and leads to an online robust algorithm and
is inspired by dual averaging and primal recovery algorithms for static problems, see \cite{nedic-subgrad-siam}. 

\textbf{Improvement of ML Models}. Clearly, despite the efforts to improve the execution of analytics at small devices, e.g., by residual learning or compression \cite{NN-compression}, the trade off between local low-accuracy and cloudlet high-accuracy execution is still important due to the increasing number and complexity of these tasks. This observation has spurred efforts for designing fast multi-tier (cloud to edge) deep neural networks \cite{DDNN_1} and for dynamic model selection \cite{dnn-dynamic}, among others. These works are orthogonal to our approach and can be directly incorporated in our framework. 


\section{Conclusions} \label{sec:conclusions}

We propose the idea of improving the execution of data analytics at IoT devices with more robust instances running at cloudlets. The key feature of our proposal is a dynamic and distributed algorithm that makes the outsourcing decisions based on the expected performance improvement, and the available resources at the devices and cloudlet. 
The proposed algorithm achieves near-optimal performance in a deterministic fashion, and under minimal assumptions about the system behavior. 
This makes it ideal for the problem at hand where, the stochastic effects (e.g., expected accuracy gains) have unknown mean values and possibly non-i.i.d. behavior. 


\section*{Acknowledgments}
This publication has emanated from research supported in part by SFI research grants 17/CDA/4760, 16/IA/4610 and is co-funded under the European Regional Development Fund under Grant Number 13/RC/2077.

\bibliographystyle{IEEEtran}
\bibliography{citations}

\end{document}